\documentclass[journal,twoside,web]{ieeecolor}
\usepackage{generic}
\usepackage{lcsys}
\usepackage{mypreamble}

\newcommand{\maketextstyle}{\textstyle} %

\newcommand{\PP}{PP}
\renewcommand{\PP}{{period-preserving}}

\usepackage[normalem]{ulem}

\begin{document}

\title{Amplitude Dependent Bode Diagrams via Scaled Relative Graphs}
\author{Julius P.~J. Krebbekx$^1$, Roland Tóth$^{1,2}$, Amritam Das$^1$, Thomas Chaffey$^3$
\thanks{$^1$Control Systems group, Department of Electrical Engineering,  Eindhoven University of Technology, The Netherlands.}
\thanks{$^2$Systems and Control Lab, HUN-REN Institute for Computer Science and Control, Budapest, Hungary.}
\thanks{$^3$School of Electrical and Computer Engineering, The University of Sydney, Australia. 
E-mails: {\tt\small \{j.p.j.krebbekx, r.toth, am.das\}@tue.nl, thomas.chaffey@sydney.edu.au}}
\rlap{\smash{
\begin{tikzpicture}[remember picture,overlay]
  \node[anchor=north, yshift=-0.2cm] at (current page.north) {
    \parbox{0.9\textwidth}{
      \scriptsize \centering
      \copyright~2026 IEEE. Personal use of this material is permitted.  Permission from IEEE must be obtained for all other uses, in any current or future media, including reprinting/republishing this material for advertising or promotional purposes, creating new collective works, for resale or redistribution to servers or lists, or reuse of any copyrighted component of this work in other works. This work has been accepted for publication in IEEE Control Systems Letters. This is the author's accepted version prior to copyediting, proofreading, and typesetting.
    }
  };
\end{tikzpicture}}}%
}

\maketitle
\thispagestyle{empty}

\begin{abstract}
    \emph{Scaled Relative Graphs} (SRGs) provide an intuitive graphical frequency-domain method for the analysis of \emph{Nonlinear} (NL) systems, generalizing the Nyquist diagram. In this paper, we develop a method for computing $L_2$-gain bounds for Lur'e systems over bounded frequency and amplitude ranges. We do this by restricting the input space of the SRG both in frequency and energy content, and using methods from Sobolev theory. The resulting gain bounds over restricted sets of inputs are less conservative than bounds computed over the entire $L_2$, and yield three-dimensional NL generalization of the Bode diagram, plotting $L_2$-gain as function of both input frequency and energy content. In the zero-energy limit, the \emph{Linear Time-Invariant} (LTI) Bode diagram is recovered, and at the infinite-energy zero-frequency limit, we recover the $L_2$-gain. The effectiveness of our method is demonstrated on an example that resembles Phase-Locked Loop dynamics.
\end{abstract}

\begin{IEEEkeywords}
Scaled relative graphs, Performance of nonlinear systems, Bode diagrams.

\end{IEEEkeywords}

\section{Introduction}

\IEEEPARstart{F}{or} \emph{Linear Time-Invariant} (LTI) systems, graphical analysis tools such as the Nyquist and the Bode diagram are foundational for control engineering. They enable intuitive and easy to use controller design methods through their frequency-domain interpretation. When pushing the performance of dynamical systems, the \emph{Nonlinear} (NL) dynamics start to play a crucial role. Inspired by the success of analyzing performance through Bode diagrams in the LTI case, e.g., for mixed sensitivity shaping~\cite{skogestadMultivariableFeedbackControl2010}, there have been multiple attempts to generalize the Bode diagram to the NL case. The describing function~\cite{krylovIntroductionNonlinearMechanics1947}, which is approximate, has been the first step in this direction by considering both frequency- and amplitude-dependence, and has been successfully applied in practice~\cite{heertjesVariableGainMotion2016}. Most modern and non-approximate methods focus on sinusoidal inputs~\cite{pavlovFrequencyDomainPerformance2007}, but extensions to higher-order harmonics in the input~\cite{rijlaarsdamComparativeOverviewFrequency2017} exist as well. %

The \emph{Scaled Relative Graph} (SRG) \cite{ryuScaledRelativeGraphs2022} is a novel graphical analysis method and was first used in~\cite{chaffeyGraphicalNonlinearSystem2023} to generalize the Nyquist diagram to a NL setting. It is an exact method, and it is intuitive because of its close connection to the Nyquist diagram. At its core, SRG analysis provides performance bounds in terms of (incremental) $L_2$-gain. However, many researchers have realized that $L_2$-gain is too conservative to serve as a meaningful performance metric for NL systems, and instead, the set of inputs should be restricted~\cite{megretskiCombiningL1L21995,megretskiSystemAnalysisIntegral1997,dahlehRejectionPersistentBounded1992,zamesNonlinearOperatorsSystem1960}. 

In~\cite{krebbekxNonlinearBandwidthBode2025a}, the SRG has been used to compute the $L_2$-gain restricted to spaces of periodic inputs, yielding NL Bode diagrams~\cite{krebbekxNonlinearBandwidthBode2025a}. A major limitation of~\cite{krebbekxNonlinearBandwidthBode2025a} is that amplitude is not considered; hence, the gains are computed as worst-case gains over all amplitudes of inputs with a certain frequency.

In this paper, we extend~\cite{krebbekxNonlinearBandwidthBode2025a} by considering an additional energy constraint on the input, which yields an amplitude bound on the output. Our approach consists of two parts. First, we develop a method for bounding the output amplitude for inputs of bounded frequency and amplitude using Sobolev theory~\cite{brezisFunctionalAnalysisSobolev2011,ziemerWeaklyDifferentiableFunctions1989}. This method relies on the $L_2$-gain from the input to the output and from the time derivative of the input to the time derivative of the output. In the second part, SRG methods are developed to compute these $L_2$-gains from the input to the output and their time derivatives. %
The result is a three-dimensional Bode diagram; $L_2$-gain as a function of both input frequency and energy. We demonstrate our method on a Lur'e system that represents \emph{Phase-Locked Loop} (PLL) dynamics~\cite{stephensPhaseLockedLoopsWireless1998}.

Recently, a notion of nonlinear Bode diagrams with amplitude dependence was developed in~\cite{moreschiniFrequencyResponseLoop2026}. However, their approach differs significantly from the one presented here in several ways. In~\cite{moreschiniFrequencyResponseLoop2026}, an exact amplitude-dependent gain is computed using the (numerical) solution of the differential equations describing the system via the invariance principle, focusing on general nonlinear systems and parametrized families of input signals. In contrast, the method presented in this paper considers all periodic signals with finite energy, not just a parametrized set, and does not require solving the underlying differential equations of the system. On the other hand, we restrict our focus to Lur'e systems and compute an upper bound for the amplitude-dependent gain using SRGs, instead of the exact gain.

The rest of this paper is structured as follows. In Section~\ref{sec:preliminaries}, we present the required preliminaries. Our first contribution, the frequency-dependent gains and the amplitude bound from Sobolev theory, is developed in Section~\ref{sec:ampl-freq-dep-gain}. Our second contribution is given in Section~\ref{sec:gain-from-SRGs}, where SRG tools are developed to compute frequency- and amplitude-dependent gains. Finally, we apply our results to the PLL example in Section~\ref{sec:example} and present our conclusions in Section~\ref{sec:conclusion}.

\section{Preliminaries}\label{sec:preliminaries}

Let $\R$ and $\C$ ($\C_\infty := \C \cup \{ \infty \}$) denote the real and (extended) complex numbers, respectively, with $\R_{\geq 0} = [0, \infty)$. For sets $A, B \subseteq \C_\infty$, the sum $A+B$ and product $AB$ are the Minkowski sum and product. We denote the disk in $\C$ centered on $\R$, which intersects $\R$ in $[\alpha, \beta]$, as $D_{[\alpha, \beta]}$. The distance between non-empty sets $\mathcal{C}_1,\mathcal{C}_2 \subseteq \C_\infty$ is defined as $\dist(\mathcal{C}_1,\mathcal{C}_2) := \inf_{z_1 \in \mathcal{C}_1, z_2 \in \mathcal{C}_2} |z_1-z_2|$, where $|\infty-\infty|:=\infty$.

\subsection{Signals}

We consider the sets of signals $\mathcal{V} = \{u : \R_{\geq 0} \to \R\}$ and $\mathcal{V}_T = \{u : [0,T] \to \R\}$, and the signal spaces $L_2 = \{ u \in \mathcal{V} \mid \norm{u} < \infty \}$, where $\norm{u}^2 = \int_0^\infty |u(\tau)|^2 \dd \tau$, and $L_2[0,T] = \{ u \in \mathcal{V}_T \mid \norm{u}_T < \infty \}$, where $\norm{u}_T^2 = \int_0^T |u(\tau)|^2 \dd \tau$ for any $T>0$. The supremum norm is denoted as $\norm{u}_\infty$. Define the truncation operator $P_T$ on signals $u \in \mathcal{V}$ as $(P_T u)(t) = 0$ for all $t>T$, else $(P_T u)(t) = u(t)$. The extension of $L_2$ %
is defined as $\Lte := \{ u \in \mathcal{V} \mid \norm{P_T u} < \infty \text{ for all } T \in \R_{\geq 0} \}$.
The space $\Lte$ is particularly useful since it includes periodic, persistent and unbounded signals, which are otherwise excluded from $L_2$.

Periodic signals $u\in \Lte$ can also be uniquely viewed as elements in $L_2[0,T]$, where $T$ is the period of the signal, i.e. $u(t)=u(t+T)$ for all $t\in [0,\infty)$. Any $u \in L_2[0,T]$ can be written in Fourier series as $u(t) = \sum_{k \in \Z} \hat{u}_k e^{2 \pi j k t/T}$ where $\hat{u}_k \in \C$ are the Fourier coefficients. The \emph{Root-Mean-Square} (RMS) norm of the signal is defined as 
\begin{equation}\label{eq:rms_fourier_L2}
    \maketextstyle \norm{u}_\mathrm{RMS} := \sqrt{\sum_{k \in \Z} |\hat{u}_k|^2} = 1/\sqrt{T} \norm{u}_T.
\end{equation}
For any locally integrable\footnote{$\int_I |u(\tau)| \dd \tau <\infty$ for all compact $I \subseteq \R_{\geq 0}$.} $u$, the \emph{weak derivative} is a function $v$ that obeys $\int_{\R_{\geq 0}} u(\tau) \phi'(\tau) \dd \tau = - \int_{\R_{\geq 0}} v(\tau) \phi(\tau) \dd \tau$ for all continuously differentiable $\phi$ with local support~\cite[Ch. 8]{brezisFunctionalAnalysisSobolev2011}. If such $v$ exists, we denote it with $u'$. If $u$ has a locally integrable derivative $\frac{\dd}{\dd t}u$, then $\frac{\dd}{\dd t}u = u'$ holds. 

Let $H^1 = \{ u \in \mathcal{V} \mid u, u' \in L_2\}$ denote the \emph{Sobolev space} of signals for which both $u$ and $u'$ have finite energy. Similarly, let $H^1[0,T] = \{ u \in \mathcal{V}_T \mid u,u' \in L_2[0,T] \} $ and define the spaces $H^1_{t_0} = \{u \in H^1 \mid u(t_0) = 0\}$ and $H^1_{t_0}[0,T] = \{u \in H^1[0,T] \mid u(t_0) = 0\}$. %
The latter two spaces are of interest as they enforce $u(t_0) = 0$.

\subsection{Systems}\label{sec:systems}

For operators $R :L_2 \to L_2$, we define the $L_2$-gain as
\begin{equation}\label{eq:L2-gain}
    \maketextstyle \gamma(R) := \sup_{0 \neq u \in L_2} \frac{\norm{Ru}}{\norm{u}}.
\end{equation}
We model systems as causal\footnote{An operator $R : \Lte \to \Lte$ is \emph{causal} if $P_T R P_T = P_T R$.} operators $R : \Lte \to \Lte$ with zero initial conditions (if applicable). Moreover, we assume $R(0) =0$. A system $R$ is called $L_2$-stable if $Ru \in L_2$ for all $u \in L_2$, and it has finite $L_2$-gain if $\gamma(R) < \infty$.

If $R : H^1 \to H^1$, then the map $\partial R$, defined as $ u' \mapsto y'$ where $y = Ru$, is called the \emph{derivative map}. 

The feedback system in Fig.~\ref{fig:lure}, where $G: \Lte \to \Lte$ is LTI and $\phi :\R \to \R$ is a NL function, defined by
\begin{equation}\label{eq:lure}
    y = G e, \quad e = u - \phi(y),
\end{equation}
is called a \emph{Lur'e system}. Such a system is called \emph{well-posed}~\cite{megretskiSystemAnalysisIntegral1997} if the map $e \mapsto u$, defined by $u = e+\phi(Ge)$ in~\eqref{eq:lure}, has a causal inverse on $\Lte$. We will use the notation $[G, \phi]$ to denote the map $u \mapsto y$ defined by~\eqref{eq:lure}. We will often require the following condition on the Lur'e system.%

\begin{condition}\label{condition:lure}
    In~\eqref{eq:lure}, $G$ is a proper and stable\footnote{All poles $p$ obey $\mathrm{Re}(p) < 0$.} LTI system, $\gamma(\phi) < \infty$, and $[G, \tau \phi]$ is well-posed for all $\tau \in [0,1]$.
\end{condition}

\begin{figure}[t]
    \centering

    \tikzstyle{block} = [draw, rectangle, 
    minimum height=1.5em, minimum width=2em]
    \tikzstyle{sum} = [draw, circle, scale=0.5, node distance={0.5cm and 0.5cm}]
    \tikzstyle{input} = [coordinate]
    \tikzstyle{output} = [coordinate]
    \tikzstyle{pinstyle} = [pin edge={to-,thin,black}]
    
    \begin{tikzpicture}[auto, node distance = {0.1cm and 0.5cm}]
        \node [input, name=input] {};
        \node [sum, right = of input] (sum) {};
        \node [block, right = of sum] (lti) {$G(s)$};
        \node [coordinate, right = of lti] (z_intersection) {};
        \node [output, right = of z_intersection] (output) {}; %
        \node [block, below = of lti] (static_nl) {$\phi$};
    
        \draw [->] (input) -- node {$u$} (sum);
        \draw [->] (sum) -- node {$e$} (lti);
        \draw [->] (lti) -- node [name=z] {$y$} (output);
        \draw [->] (z) |- (static_nl);
        \draw [->] (static_nl) -| node[pos=0.99] {$-$} (sum);
    \end{tikzpicture}
    
    \caption{Lur'e system with LTI $G$ and NL $\phi : \R \to \R$.}
    \label{fig:lure}
    \vspace{-1.5em}
\end{figure}

\subsection{Scaled Relative Graphs}

In this paper, we will derive a bound on the output amplitude via an energy bound on the input. As amplitude is measured with respect to a reference zero value, we use the non-incremental version of the SRG, called the \emph{Scaled Graph} (SG). For linear operators, the SRG and SG coincide~\cite{chaffeyGraphicalNonlinearSystem2023}.

Let $\mathcal{L}$ be a Hilbert space with inner product $\inner{\cdot}{\cdot}: \mathcal{L} \times \mathcal{L} \to \C$ and consider the relation $R : \mathcal{L} \to \mathcal{L}$. The angle between $u, y\in \mathcal{L}$ is defined as $\angle(u, y) := \cos^{-1} \frac{\mathrm{Re} \inner{u}{y}}{\norm{u} \norm{y}} \in [0, \pi]$, where $\angle(u, y) := 0$ if either $u$ or $y$ is zero. For any $\mathcal{U} \subseteq \mathcal{L}$, the SG of $R$ over $\mathcal{U}$ is defined as
\begin{equation}\label{eq:SG}
    \textstyle \SG_\mathcal{U}(R) = \left\{ \frac{\norm{y}}{\norm{u}} e^{\pm j \angle(u, y)} \mid y \in Ru, \, 0 \neq u \in \mathcal{U} \right\}.
\end{equation}
If $\mathcal{U} = \mathcal{L}$, we write $\SG(R)$. It follows from~\eqref{eq:L2-gain} and~\eqref{eq:SG} that $\gamma(R)$ is the radius of $\SG(R)$.

We use the following stability condition\footnote{For non-incremental analysis using the SG, well-posedness must be assumed. However, when using SRGs for incremental analysis, well-posedness is \emph{constructed} from causal loop components and the incremental small-gain theorem; see~\cite{chaffeyHomotopyTheoremIncremental2025}. Moreover, the homotopy condition $\tau \in [0,1]$ can be omitted when using the hard~\cite{chenSoftHardScaled2026} or extended~\cite{krebbekxScaledRelativeGraph2025a} SRG.} for~\eqref{eq:lure} from~\cite{vandeneijndenScaledGraphsReset2024}, where the gain bound is extracted at $\tau=1$ as in~\cite{chaffeyHomotopyTheoremIncremental2025}.

\begin{proposition}\label{prop:lure_stability_theorem}
    Suppose that Condition~\ref{condition:lure} holds, $\SG(\phi)$ satisfies the chord property (see~\cite{ryuScaledRelativeGraphs2022} for the definition), then $\gamma([G,\phi]) \leq 1/r_1$ holds if there exist $r, r_\tau $ such that
    \begin{equation}\label{eq:lure_stability_theorem}
        \dist(\SG(G)^{-1}, -\tau \SG(\phi)) \geq r_\tau \geq r >0, \; \forall \tau \in [0,1].
    \end{equation}
\end{proposition}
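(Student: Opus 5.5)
The plan has two stages. Stage one uses the homotopy in $\tau$, with the uniform margin $r$, only to show that $[G,\phi]$ maps $L_2$ into $L_2$. Stage two works at $\tau=1$ alone: once we know the closed-loop signals are in $L_2$, a direct scaled-graph computation on the loop gives the sharper bound $1/r_1$. Here $r_1$ denotes the value of $r_\tau$ at $\tau=1$, so $r_1\ge r$ and $1/r_1\le 1/r$.

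\textbf{Stage 1 (stability via homotopy).} I would follow the homotopy argument of~\cite{chaffeyHomotopyTheoremIncremental2025,vandeneijndenScaledGraphsReset2024}. For every $\tau$ for which $[G,\tau\phi]$ is $L_2$-stable, the Stage 2 computation below (with $\tau\phi$ in place of $\phi$) gives $\gamma([G,\tau\phi])\le 1/r_\tau\le 1/r$. This bound is uniform in $\tau$. By Condition~\ref{condition:lure}, $G$ is stable and $\gamma(\phi)<\infty$. Hence $\tau\mapsto[G,\tau\phi]$ is continuous in the sense required by the standard homotopy lemma (cf.\ Megretski--Rantzer~\cite{megretskiSystemAnalysisIntegral1997}): if $[G,\tau_0\phi]$ is stable with gain at most $1/r$, then so is $[G,\tau\phi]$ for $|\tau-\tau_0|<\delta$. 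The step $\delta$ depends only on $r$, $\gamma(G)$ and $\gamma(\phi)$. This uses well-posedness, causality and truncations $P_T$. Starting from $\tau=0$, where $[G,0]=G$ is stable, finitely many steps reach $\tau=1$. I expect this to be the main obstacle. The delicate points are to make $\delta$ independent of $\tau_0$ and to justify the truncation argument on $\Lte$. I would simply invoke the cited homotopy theorem for this step rather than redo it.

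\textbf{Stage 2 (sharp gain at $\tau=1$).} Fix $0\neq u\in L_2$ and let $y=[G,\phi]u$ and $e=u-\phi(y)$. Stage 1 gives $y\in L_2$, so $\phi(y)\in L_2$ and hence $e\in L_2$. If $y=0$ there is nothing to prove, so assume $y\neq0$; then $e\neq 0$ because $y=Ge$. The pair $(y,e)$ lies in the graph of $G^{-1}$, so the point $z_1=\frac{\norm{e}}{\norm{y}}e^{\pm j\angle(y,e)}$ lies in $\SG(G^{-1})=\SG(G)^{-1}$. The pair $(y,\phi(y))$ gives a point $z_2\in\SG(\phi)$.

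Next I would apply the SRG sum rule of~\cite{ryuScaledRelativeGraphs2022}, which is valid for the non-incremental SG when one summand has the chord property, here $\SG(\phi)$. Since $u=e+\phi(y)$, the point $\frac{\norm{u}}{\norm{y}}e^{\pm j\angle(y,u)}$ lies in $\SG(G)^{-1}+\SG(\phi)$. Therefore
\[
\frac{\norm{u}}{\norm{y}}\ \ge\ \dist\bigl(0,\SG(G)^{-1}+\SG(\phi)\bigr)=\dist\bigl(\SG(G)^{-1},-\SG(\phi)\bigr)\ \ge\ r_1 .
\]
This gives $\norm{y}\le \norm{u}/r_1$ for every $u$. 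Taking the supremum yields $\gamma([G,\phi])\le 1/r_1$.

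The only technical care needed in Stage 2 is the orientation and conjugate symmetry of the sum rule. Because the SG is symmetric under conjugation, taking both $\pm$ choices of angle together is harmless. A second point is to check that $y$ is used as the common "input" for both $G^{-1}$ and $\phi$, which is what the sum rule requires.
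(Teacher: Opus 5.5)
Your proposal is correct and follows essentially the route the paper relies on. The paper gives no proof of its own and defers to \cite{vandeneijndenScaledGraphsReset2024,chaffeyHomotopyTheoremIncremental2025}: closed-loop stability along $\tau\in[0,1]$ comes from the homotopy argument with the uniform margin $r$. The sharper bound $1/r_1$ then comes from the SG inverse rule and the chord-property sum rule applied at $\tau=1$, which is exactly your Stage 1/Stage 2 split.
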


In this paper, we bound $\SG(\phi) \subseteq D_{[a,b]}$ with some $a\leq b \in \R$. Note that any such disk $D_{[a,b]}$ satisfies the chord property.

\section{Amplitude and Frequency Dependent Gains}\label{sec:ampl-freq-dep-gain}

In this section, we define the frequency- and amplitude-dependent gains that are the subject of this paper. We also show how Sobolev theory can be used to bound $\norm{y}_\infty$ in~\eqref{eq:lure}.

\subsection{Frequency-Dependent Gain}

Note that $L_2$-gain~\eqref{eq:L2-gain} alone provides no frequency-domain information. To obtain a more insightful performance metric, we consider the subspace of odd $T$-periodic signals\footnote{We use $ \mathcal{U}_{\omega \mathrm{e}} \subseteq \Lte$ to refer to the periodic signals in $\Lte$ obtained by repeating $u \in \mathcal{U}_\omega$ on $[k T, (k+1)T]$ for $k=0,1,\dots$.} 
\begin{equation}\label{eq:input_U_w}
    \maketextstyle \mathcal{U}_\omega = \left\{ u \in L_2[0,T] \mid u(t) = \sum_{k \in 2\Z+1} \hat{u}_k e^{j \omega k t} \right\},
\end{equation}
with frequency $\omega = \frac{2 \pi}{T}$ and symmetry $u(t+T/2)=-u(t)$.

\begin{definition}
    A signal $u \in \Lte$ is called odd $T$-periodic for some $T>0$ if $u(t+T/2) = -u(t)$ for all $t \geq 0$.
\end{definition}

\begin{definition}
    The Lur'e system in Fig.~\ref{fig:lure} is called period-preserving if for all odd $T$-periodic $u \in \Lte$, the signals $e, y$ converge asymptotically\footnote{I.e., $\forall \epsilon>0 , \exists t>0 : \forall \tau \geq t, \; |y(\tau)-\tilde{y}(\tau)| < \epsilon$ (idem for $e \to \tilde{e}$).} to odd $T$-periodic $\tilde{e},\tilde{y} \in \Lte$.
\end{definition}

If the output $y = Ru$ for an odd $T$-periodic input $u \in \mathcal{U}_{\omega \mathrm{e}}$ converges asymptotically to some $\tilde{y} \in \mathcal{U}_{\omega \mathrm{e}}$, then we can define the frequency-dependent gain, as in~\cite{krebbekxNonlinearBandwidthBode2025a}, as
\begin{equation}\label{eq:gamma_w}
    \maketextstyle \gamma_\omega(R) := \smash{\sup\limits_{0 \neq  u \in \mathcal{U}_{\omega \mathrm{e}}} \lim\limits_{t \to \infty }} \frac{\norm{P_t y}}{\norm{P_t u}} =  \smash{ \sup\limits_{0 \neq u \in \mathcal{U}_\omega} }\frac{\norm{\tilde{y}}_\mathrm{RMS}}{\norm{u}_\mathrm{RMS}} , 
\end{equation}
where the last equality follows from~\cite[Lemma 1]{krebbekxNonlinearBandwidthBode2025a}.

For odd $T$-periodic signals, our approach restricts the SG of the LTI block $G$ to the frequencies present in the steady-state periodic response $\tilde{e},\tilde{y}$. Nonlinearities under period preservation excite higher harmonics, making the framework most effective for characterizing high-frequency roll-off. A nonzero DC component would impose a high-frequency gain lower bound tied to the DC gain of $G$. To guarantee period preservation without DC components, we exploit odd half-wave symmetry: odd nonlinearities induce odd $T$-periodic solutions with zero mean (see Theorem~\ref{thm:lure}). %
One can include even harmonics, but ensuring periodicity is harder (e.g., period preservation is an assumption in~\cite{krebbekxNonlinearBandwidthBode2025a}).

\subsection{Amplitude-Dependent Gain}

Drawing inspiration from Sobolev theory, we define a quantity called the \emph{harmonic energy} $U = \norm{u}_T \norm{u'}_T$, which captures both the energy and harmonic content of a periodic signal $u \in L_2[0,T]$ with period $T$. By bounding the harmonic energy of the input, we will be able to obtain bounds not only on $\norm{y}$, but also on $\norm{y}_\infty$. %

To motivate the choice of $U$, consider an input $u(t) = A \sin (\omega t)$, then, $\norm{u}_T \norm{u'}_T  = A^2 \pi$, i.e., $U$ is independent of $T$. Equivalently, via~\eqref{eq:rms_fourier_L2}, the quantity $U$ equals the RMS power applied for a time $T$. When $u(t) = A \sin(\omega t) + B \sin(3 \omega t)$, then $\norm{u}_T \norm{u'}_T = \pi \sqrt{ (A^2 +B^2)(A^2+ 9B^2)}$. Clearly, $U$ captures both the energy in $\norm{u}_T$ and harmonic content through $\norm{u'}_T$. Since we are interested in the output amplitude $\norm{y}_\infty$, it is natural to consider not only the energy of the input, but also its rate of variation. Therefore, $U$ is an adequate parameter to measure energy and harmonic content on both $L_2$ and $L_2[0,T]$ for all $T>0$. %

By restricting the harmonic energy of the input, we obtain a frequency- and amplitude-dependent gain, defined as
\begin{equation}\label{eq:gamma_w_U}
    \maketextstyle \gamma_{\omega, U}(R):= \underset{0 \neq  u \in \mathcal{U}_\omega, \norm{u}_T \norm{u'}_T \leq U}{\sup} \frac{\norm{\tilde{y}}_\mathrm{RMS}}{\norm{u}_\mathrm{RMS}}, 
\end{equation}
and its $L_2$-version as $ \gamma_{U}(R):= \sup_{0 \neq  u \in L_2, \norm{u} \norm{u'} \leq U} \frac{\norm{y}}{\norm{u}}$.

\subsection{Bounding Amplitude using Sobolev Methods}

In this section, we show that by bounding the harmonic energy $U$ of the input, we obtain a bound on $\norm{y}_\infty$.

\begin{lemma}\label{lemma:gagliardo-nirenberg}
    Let $u \in H^1_{t_0}[0,T]$ and $\norm{u}_T \norm{u'}_T \leq U$, then
    \begin{equation}\label{eq:sobolev}
        \norm{u}_\infty \leq \sqrt{2 \norm{u}_T \norm{u'}_T } \leq \sqrt{2 U}.
    \end{equation}
\end{lemma}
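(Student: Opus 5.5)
The plan is to use the one-dimensional Gagliardo--Nirenberg (Agmon) argument, which rests on the fundamental theorem of calculus applied to $u^2$. Since $u \in H^1[0,T]$, $u$ has an absolutely continuous representative. Hence $u^2$ is absolutely continuous with weak derivative $(u^2)' = 2 u u'$, and this derivative is integrable on $[0,T]$ by Cauchy--Schwarz, since $u,u' \in L_2[0,T]$. This is the standard chain/product rule for $H^1$ functions on a bounded interval~\cite[Ch.~8]{brezisFunctionalAnalysisSobolev2011}, and I will invoke it without reproving it.

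Fix any $t \in [0,T]$. Using $u(t_0)=0$,
\begin{equation*}
  u(t)^2 = u(t)^2 - u(t_0)^2 = \int_{t_0}^{t} 2 u(\tau) u'(\tau) \dd \tau .
\end{equation*}
Taking absolute values, bounding the integral over the subinterval between $t_0$ and $t$ by the integral over all of $[0,T]$, and applying Cauchy--Schwarz gives
\begin{equation*}
  |u(t)|^2 \leq 2 \int_0^T |u(\tau)|\,|u'(\tau)| \dd \tau \leq 2 \norm{u}_T \norm{u'}_T .
\end{equation*}
Taking the supremum over $t$ and then a square root yields $\norm{u}_\infty \leq \sqrt{2\norm{u}_T\norm{u'}_T}$. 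The second inequality in~\eqref{eq:sobolev} then follows immediately from the hypothesis $\norm{u}_T\norm{u'}_T \leq U$ and monotonicity of the square root.

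The only step needing care is the regularity step: the pointwise identity requires the continuous representative, and the bound on $\norm{u}_\infty$ should be read as the essential supremum. On $[0,T]$ every $H^1$ function admits such a representative, so I expect no genuine obstacle there. The constant $2$ is not sharp: splitting the integral on both sides of $t_0$ can improve it in some cases. Since the statement only claims the factor $2$, I would not pursue sharpening.
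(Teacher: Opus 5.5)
Your proof is correct and follows the paper's argument essentially verbatim: apply the fundamental theorem of calculus to $u^2$ with $(u^2)'=2uu'$, enlarge the integration interval to $[0,T]$, apply Cauchy--Schwarz, use $u(t_0)=0$, then take the supremum and square root. One correction to your closing aside: on $H^1_{t_0}[0,T]$ the constant $2$ is actually sharp. For example, $u(t)=e^{M|t-t_0|}-1$ on one side of $t_0$ and $u=0$ on the other gives $\norm{u}_\infty^2/(\norm{u}_T\norm{u'}_T)\to 2$ as $M\to\infty$; an improvement needs a zero of $u$ on each side of $t$ (as for the odd $T$-periodic signals in $\mathcal{U}_\omega$), not a split around $t_0$.
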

\begin{proof}
    For $u \in H^1$ and $u \in H^1[a,b]$, the chain rule $(u^2)'=2 u u'$ and the fundamental theorem of calculus hold for weak derivatives~\cite[Ch. 8]{brezisFunctionalAnalysisSobolev2011}, hence $u(t)^2 - u(t_0)^2 = \int_{t_0}^t [u(\tau)^2]' \dd \tau = 2 \int_{t_0}^t u(\tau) u'(\tau) \dd \tau$.
    By taking the absolute value and applying Cauchy-Schwartz in the previous equation, we obtain $|u(t)^2-u(t_0)^2| = |2 \int_{t_0}^t u(\tau) u'(\tau) \dd \tau | \leq  2 \int_{\min\{t_0,t\}}^{\max\{t_0,t\}} |u(\tau)| |u'(\tau)| \dd \tau \leq 2 \int_0^T |u(\tau)| |u'(\tau)| \dd \tau \leq 2\norm{u}_T \norm{u'}_T$.
    The bound\footnote{For non-periodic inputs, we restrict to $H^1_0$, i.e., inputs that start at zero. For $u \in H^1_0$, the same derivation yields $\norm{u}_\infty \leq \sqrt{2 \norm{u} \norm{u'} }$.}~\eqref{eq:sobolev} follows by choosing $u(t_0)=0$, and by taking the supremum over $t$ and the square root.
\end{proof}
The bound in~\eqref{eq:sobolev} is a special case of the Gagliardo-Nirenberg interpolation inequality~\cite[p. 233]{brezisFunctionalAnalysisSobolev2011}. 

Note that $u(t_0) =0$ at some $t_0 \in [0,T]$ holds for all $u \in \mathcal{U}_\omega \cap H^1[0,T]$, since $u \in H^1[0,T]$ are absolutely continuous and $u(t+T/2) = -u(t)$ for $u \in \mathcal{U}_\omega$. %

\subsection{Computing Frequency- and Amplitude-Dependent Gains}

In order to use Lemma~\ref{lemma:gagliardo-nirenberg} to bound $\norm{y}_\infty$ for the Lur'e system, we must restrict the input to $u \in \mathcal{U}_\omega \cap H^1[0,T]$ (or $H^1_0$). Assuming~\eqref{eq:lure} is \PP{}, let the associated periodic steady-states be $\tilde{e}, \tilde{y}$. For $\norm{\tilde{y}}_\infty \leq A$, we define the gain bounds $\norm{\tilde{y}}_T \leq \lambda_{\omega, A} \norm{u}_T$ and $\norm{\tilde{y}'}_T \leq \lambda^\partial_{\omega, A} \norm{u'}_T$, which can be computed using SRG methods as we show in the next section. Since a smaller $A$ means that we restrict the SG of $\phi$ in~\eqref{eq:lure} to a smaller set of signals, the gains $\lambda_{\omega, A} ,\lambda^\partial_{\omega, A}$ are monotone increasing functions in $A$, where $\lambda^{(\partial)}_{\omega, A} \xrightarrow[]{A \to \infty} \lambda^{(\partial)}_{\omega} $ and $\lambda_{\omega, A} \leq \lambda_{\omega}$, $\lambda^\partial_{\omega, A} \leq \lambda^\partial_{\omega}$. Using the gains $\lambda_\omega, \lambda^\partial_\omega$,~\eqref{eq:sobolev} bounds the amplitude of the output as
\begin{equation}\label{eq:amplitude-bound-A-indep}
\begin{aligned}
    \maketextstyle &\norm{\tilde{y}}_\infty \leq \sqrt{2 \norm{\tilde{y}}_T \norm{\tilde{y}'}_T} \hfill  \\ &\leq \sqrt{2 \lambda_{\omega} \lambda^\partial_{\omega} \norm{u}_T \norm{u'}_T} \leq  \sqrt{2 \lambda_{\omega} \lambda^\partial_{\omega} U}.
\end{aligned}
\end{equation}
From~\eqref{eq:amplitude-bound-A-indep}, we obtain the bound $\norm{\tilde{y}}_\infty \leq A = \sqrt{2 \lambda_{\omega} \lambda^\partial_{\omega} U}$. We use this bound $\norm{\tilde{y}}_{\infty} \leq A$ to compute the amplitude-dependent gains
$\lambda_{\omega, A} $, $\lambda^\partial_{\omega, A} $, which yields a tighter bound
\begin{equation}\label{eq:amplitude-bound}
    \maketextstyle \norm{\tilde{y}}_\infty  \leq  \sqrt{2 \lambda_{\omega, A} \lambda^\partial_{\omega, A} U} \leq A.
\end{equation}
To find the tightest bound $\norm{\tilde{y}}_{\infty} \leq A$, one must find the \emph{smallest} $A$ that satisfies~\eqref{eq:amplitude-bound}. As we are interested in how the amplitude varies as a function of the input frequency, we solve the following optimization problem for each $\omega \in \R_{\geq 0}$
\begin{equation}\label{eq:amplitude-bound-optimization}
    \maketextstyle A_{\omega,U} = \inf  \left\{ A \geq 0 \mid ~\eqref{eq:amplitude-bound} \text{ holds }  \right\},
\end{equation}
which is guaranteed to have a solution if $\lambda_{\omega} \lambda^\partial_{\omega} < \infty$ in~\eqref{eq:amplitude-bound-A-indep}. For a fixed $\omega$ and $U$, the minimizer in~\eqref{eq:amplitude-bound-optimization} is found using bisection with initial lower bound $A_\mathrm{min}=0$ and upper bound $A_\mathrm{max} = \sqrt{2 \lambda_{\omega} \lambda^\partial_{\omega} U}$ from~\eqref{eq:amplitude-bound-A-indep}. 

Given that the Lur'e system is \PP{}, we will show in the next section how SRG methods can be used to: 1) guarantee that $u \in \mathcal{U}_\omega \cap H^1[0,T]$ implies $\tilde{e}, \tilde{y} \in \mathcal{U}_\omega \cap H^1[0,T]$ and 2) to compute the gains $\lambda_{\omega,A} $ and $\lambda^\partial_{\omega,A}$.

\section{Gain Computation using SRG Methods}\label{sec:gain-from-SRGs}

In this section, we explain how the SG can be computed for LTI and slope-restricted static NL functions, and derive the SG for their derivative maps. Then, we show how they are used to transfer $H^1$ properties from input to output, and to compute the gains in~\eqref{eq:gamma_w} and~\eqref{eq:gamma_w_U}.

\subsection{SG of LTI Systems}

Throughout this section, we assume that $G:L_2 \to L_2$ is a proper and stable LTI system with transfer function $G(s)$.

\subsubsection{Input to output}

From~\cite{chaffeyGraphicalNonlinearSystem2023,patesScaledRelativeGraph2021} we know that $\SG(G)$ is the \emph{hyperbolic convex hull} $\hco(\cdot)$ (see~\cite{patesScaledRelativeGraph2021} for the definition) of the Nyquist diagram
\begin{equation}\label{eq:srg_G}
    \SG(G) = \hco(\{ G(j \omega) \mid \omega \in \R \}).
\end{equation}
For an input $u \in \mathcal{U}_{\omega \mathrm{e}}$, the output $y=Gu \in \Lte$ converges exponentially to an odd $T$-periodic output $\tilde{y}(t) = \sum_{k \in 2\Z+1} G(j k \omega) \hat{u}_k e^{j k \omega t}$~\cite[Ch. 2]{skogestadMultivariableFeedbackControl2010},
where $\tilde{y} \in \mathcal{U}_{\omega \mathrm{e}}$ since $G$ is stable~\cite{krebbekxNonlinearBandwidthBode2025a}. As we can view $\tilde{y} \in \mathcal{U}_\omega $, and $\mathcal{U}_\omega$ is a linear subspace of $L_2[0,T]$, it follows from~\cite{patesScaledRelativeGraph2021} that the SG of the map $\tilde{G} : \mathcal{U}_\omega \to \mathcal{U}_\omega$, defined as $\tilde{G}: u \mapsto \tilde{y}$, is given by\footnote{If $u$ were periodic, but not in $\mathcal{U}_{\omega \mathrm{e}}$, it can have a nonzero DC component, which would force us to include $G(0)$ in the hyperbolic convex hull. This can lead to a larger SG, i.e., more conservatism. To avoid this, the odd half-wave symmetry in $\mathcal{U}_{\omega \mathrm{e}}$ is a \emph{sufficient} condition for a zero DC component.}%
\begin{equation}\label{eq:srg_G_w}
    \SG_{\mathcal{U}_\omega}(\tilde{G}) = \hco(\{ G(j k \omega) \mid k \in 2\Z+1 \}),
\end{equation}
as was proven in~\cite{krebbekxNonlinearBandwidthBode2025a}. %

\subsubsection{Derivative map}\label{sec:lti_derivative_map} 

The following two propositions show how the SG of the derivative map $\partial G$ is related to $\SG(G)$.

\begin{proposition}\label{prop:srg_G_deriv}
    For a stable and proper $G(s)$, the SG of $\partial G : u' \mapsto y'$ restricted to $u \in H^1_0$ is given by 
    \begin{equation}\label{eq:srg_G_deriv}
        \maketextstyle \SG_{\mathcal{D}_0}(\partial G) \subseteq \SG(G),
    \end{equation}
    where $\mathcal{D}_0 = \{ u' \in L_2 \mid u \in H_0^1 \}$ is the domain of $\partial G$.
\end{proposition}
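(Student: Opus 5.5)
The strategy is to show that, on $\mathcal{D}_0$, the derivative map $\partial G$ acts as the LTI operator $G$ itself. Then every point of $\SG_{\mathcal{D}_0}(\partial G)$ is a point of $\SG(G)$ evaluated at the input $v = u' \in L_2$.

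Concretely, fix $u \in H^1_0$ and let $y = Gu$. We have to check two things: that $y \in H^1$, and that $y' = G u'$. Write $G(s) = D + C(sI-A)^{-1}B$, with $A$ Hurwitz because $G$ is stable and proper. In state-space form, $\dot x = Ax + Bu$, $x(0)=0$, $y = Cx + Du$. Since $u$ is absolutely continuous with $u' \in L_2$, $x$ is $C^1$ and $\dot x \in L_2$, because $A$ is Hurwitz and $u \in L_2$. Differentiating gives
\[
\ddot x = A\dot x + B u',
\]
in the weak sense, with initial condition $\dot x(0) = Ax(0) + Bu(0) = 0$. This is exactly where $u(0)=0$, i.e.\ $u \in H^1_0$, is used. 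Hence $x' = (sI-A)^{-1}B\,u'$ with zero initial condition, and
\[
y' = Cx' + Du' = G u'.
\]
Equivalently, in the Laplace domain, $\mathcal{L}\{u'\}(s) = s\hat u(s) - u(0) = s\hat u(s)$, so $\mathcal{L}\{y'\}(s) = sG(s)\hat u(s) - y(0) = G(s)\,\mathcal{L}\{u'\}(s)$. Here $y(0) = Du(0) = 0$, and this identity is justified on $L_2$ via Paley–Wiener. It follows that $y' = Gu' \in L_2$, so $y \in H^1$.

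Given this identity, take any point $z \in \SG_{\mathcal{D}_0}(\partial G)$. It has the form $\frac{\norm{y'}}{\norm{u'}}e^{\pm j\angle(u',y')}$ with $0 \neq u' \in \mathcal{D}_0$ and $y' = G u'$. Setting $v := u' \in L_2$, $v\neq 0$, we get $z = \frac{\norm{Gv}}{\norm{v}}e^{\pm j\angle(v,Gv)} \in \SG(G)$ by definition~\eqref{eq:SG}. This proves the inclusion~\eqref{eq:srg_G_deriv}. Since $\mathcal{D}_0$ may be a proper subset of $L_2$, we only claim inclusion, not equality.

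The main obstacle is justifying $y' = Gu'$ rigorously for weak derivatives, in particular the correct treatment of the initial values in the Laplace/state-space argument. The proper feedthrough term $D$ requires $u(0)=0$ so that no impulse appears in $y'$. I would handle this with the variation-of-constants formula and integration by parts,
\[
\int_0^t e^{A(t-\tau)}Bu'(\tau)\,\dd\tau = Bu(t) - e^{At}Bu(0) + A\int_0^t e^{A(t-\tau)}Bu(\tau)\,\dd\tau ,
\]
valid for absolutely continuous $u$. Comparing this with the formula for $\dot x$ gives the identity directly.
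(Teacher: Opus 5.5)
Your proof is correct and follows essentially the same route as the paper: both use $u(0)=0$ and the zero initial conditions of $G$ to show that the derivative map acts on $\mathcal{D}_0$ as the same LTI operator ($sY(s) = G(s)\,sU(s)$, i.e.\ $y' = Gu'$), so every point of $\SG_{\mathcal{D}_0}(\partial G)$ is a point of $\SG(G)$ at the input $v = u' \in \mathcal{D}_0 \subseteq L_2$. Your state-space and variation-of-constants argument is a more careful time-domain justification of the Laplace identity that the paper takes for granted; it assumes a rational realization, which matches the paper's pole-based notion of stability.
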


\begin{proof}
    Let $u \in  H^1_0$ and $\partial G$ denote the map $\partial G : u' \mapsto y'$, where $y= Gu$. The Laplace transform of a (weak) derivative reads $\mathcal{L}(x') = sX(s) - x(0)$, and since $u \in H^1_0$ and we assume zero initial conditions for $G$, the operator $\partial G$ in the Laplace domain reads $sY(s) = G(s) s U(s)$. Moreover, since $u \in H_0^1 \iff u' \in \mathcal{D}_0$, we can conclude that $\partial G : \mathcal{D}_0 \to L_2$. Finally, the maps $\partial G : \mathcal{D}_0 \to L_2$ and $G : L_2 \to L_2$ are defined by the same Laplace domain operator $G(s)$, and since $\mathcal{D}_0 \subseteq L_2$, we conclude $\SG_{\mathcal{D}_0}(\partial G) \subseteq \SG(G)$. 
\end{proof}
\vspace*{-1em}
\begin{proposition}\label{prop:srg_w_g_deriv}
    For a stable and proper $G(s)$, the SG of $\partial \tilde{G} : u' \mapsto \tilde{y}'$ restricted to $u \in \mathcal{U}_\omega \cap H^1[0,T]$ is given by
    \begin{equation}\label{eq:srg_w_G_deriv}
        \maketextstyle \SG_{\mathcal{U}_\omega}(\partial \tilde{G}) = \SG_{\mathcal{U}_\omega}(\tilde{G}).
    \end{equation}
\end{proposition}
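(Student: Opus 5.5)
The plan is to work entirely in the Fourier domain on $\mathcal{U}_\omega$, exactly as in the derivation of~\eqref{eq:srg_G_w}, and to show that $\partial\tilde{G}$ is the same multiplication operator as $\tilde{G}$, acting on a domain that is itself $\mathcal{U}_\omega$.

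First, take $u \in \mathcal{U}_\omega \cap H^1[0,T]$ with Fourier series $u(t)=\sum_{k\in 2\Z+1}\hat u_k e^{jk\omega t}$. Because $u$ is absolutely continuous and $T$-periodic, with $u(0)=u(T)$ as an element of $H^1[0,T]$ viewed periodically, integration by parts gives the Fourier coefficients of the weak derivative as $\widehat{(u')}_k = jk\omega\,\hat u_k$. Hence $u' \in \mathcal{U}_\omega$: it has only odd harmonics, and $\sum |k\omega \hat u_k|^2<\infty$ by Parseval. Conversely, every $v \in \mathcal{U}_\omega$ equals $u'$ for $u=\sum_k \hat v_k/(jk\omega)\, e^{jk\omega t}\in \mathcal{U}_\omega\cap H^1[0,T]$. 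This inversion is well defined because $k\neq 0$ for odd $k$, so no DC term obstructs it. Therefore the domain $\{u' \mid u\in\mathcal{U}_\omega\cap H^1[0,T]\}$ equals $\mathcal{U}_\omega$ exactly.

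Next, recall that $\tilde y=\sum_k G(jk\omega)\hat u_k e^{jk\omega t}$. Since $G$ is stable and proper, $|G(jk\omega)|$ is bounded, so $\sum |k\omega G(jk\omega)\hat u_k|^2<\infty$. This gives $\tilde y\in H^1[0,T]$, with $\widehat{(\tilde y')}_k = jk\omega\, G(jk\omega)\hat u_k = G(jk\omega)\,\widehat{(u')}_k$. The map $\partial\tilde G: u'\mapsto \tilde y'$ is therefore the multiplier $v \mapsto \sum_k G(jk\omega)\hat v_k e^{jk\omega t}$ on $\mathcal{U}_\omega$, which is exactly $\tilde G$. Two operators that coincide on the same domain have the same SG, so $\SG_{\mathcal{U}_\omega}(\partial\tilde G)=\SG_{\mathcal{U}_\omega}(\tilde G)$, which by~\eqref{eq:srg_G_w} equals $\hco(\{G(jk\omega)\mid k\in 2\Z+1\})$.

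The main obstacle is the domain identification in the first step. Unlike Proposition~\ref{prop:srg_G_deriv}, where only an inclusion was obtained because $\mathcal{D}_0\subsetneq L_2$ could be a proper subset, here equality is needed. Equality requires surjectivity of $u\mapsto u'$ onto $\mathcal{U}_\omega$, and this is exactly where the absence of the $k=0$ harmonic, guaranteed by the odd half-wave symmetry, is essential. A secondary point is to justify the periodic integration by parts: the boundary terms cancel because $u(T)=u(0)$, which follows from $u(t+T/2)=-u(t)$ applied twice together with continuity of $H^1$ functions.
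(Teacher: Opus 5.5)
Your proposal is correct and follows the paper's own route: both identify $\partial\tilde G$ with the multiplier $\hat v_k \mapsto G(jk\omega)\hat v_k$ via $\widehat{(u')}_k = jk\omega\,\hat u_k$, and obtain equality (rather than inclusion) because $\partial\tilde G$ acts on all of $\mathcal{U}_\omega$. Your explicit inversion $\hat u_k = \hat v_k/(jk\omega)$, valid since $k\neq 0$, is the Fourier form of the Poincar\'e inequality the paper cites for this step. You also justify the boundary-term cancellation $u(0)=u(T)$, which the paper leaves implicit.
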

\begin{proof}
    Let $u \in \mathcal{U}_\omega \cap H^1[0,1]$ and $\partial \tilde{G}$ denote the map $\partial \tilde{G} : u' \mapsto \tilde{y}'$, where $\tilde{y}$ is the odd $T$-periodic response to $u$. The Poincar\'e inequality $u' \in L_2[0,T] \implies u \in L_2[0,T]$~\cite[p. 218]{brezisFunctionalAnalysisSobolev2011} and stability of $G(s)$ imply $\partial \tilde{G}: \mathcal{U}_\omega \to \mathcal{U}_\omega$. Since $\tilde{y}'(t) = \sum_{k \in 2\Z+1} G(j k \omega) \hat{u}'_k e^{j k \omega t}$,
    where $\hat{u}'_k = jk\omega \hat{u}_k$, it is clear that $\SG_{\mathcal{U}_\omega}(\partial \tilde{G}) = \SG_{\mathcal{U}_\omega}(\tilde{G})$ by the same reasoning as in Proposition~\ref{prop:srg_G_deriv}, where we obtain equality since $ \tilde{G}$ and $\partial \tilde{G} $ both map $\mathcal{U}_\omega$ into itself.
\end{proof}

The procedure above can be repeated to obtain the SG for higher order derivative maps, which is not pursued here.

\subsection{SG of Nonlinearities}\label{sec:nonlinearities}

Throughout this section, consider the NL function $\phi : \R \to \R$ which satisfies for all $A>0$ and $|x|, |y| \leq A$
\begin{equation}\label{eq:static_nl}
    \maketextstyle a(A) \leq \frac{\phi(x) - \phi(y)}{x-y} \leq b(A), \quad c(A) \leq \frac{\phi(x)}{x}\leq d(A),
\end{equation}
where $a,b,c,d : \R_{\geq 0} \mapsto \R$ are bounded monotone functions $a(A) \leq b(A), c(A) \leq d(A)$. Their limits as $A \to \infty$ are denoted as $a^*, b^*, c^*, d^*$. We use $\phi_A \in \partial [a(A),b(A)]$ ($\phi \in \partial [a^*, b^*]$) and $ \phi_A \in [c(A),d(A)]$ ($\phi \in [c^*, d^*]$) to denote the amplitude-(in)dependent slope and sector condition, respectively\footnote{Note that $[c(A), d(A)] \subseteq [a(A), b(A)]$, and $[a(A_1), b(A_1)] \subseteq [a(A_2), b(A_2)]$ and $[c(A_1), d(A_1)] \subseteq [c(A_2), d(A_2)]$ for $0\leq A_1 \leq A_2$.}.

\subsubsection{Input to output}

From~\cite{chaffeyGraphicalNonlinearSystem2023}, we know that for $\phi$ in~\eqref{eq:static_nl}, the SG obeys $\SG(\phi) \subseteq D_{[\mathrm{c},\mathrm{d}]}$, hence $\gamma(\phi) \leq \max\{|\mathrm{c}|, |\mathrm{d}|\}$. As shown in~\cite{chaffeyGraphicalNonlinearSystem2023}, this bound is exact in specific cases, but in general it is unknown how conservative the bound is. Moreover, since $\phi$ is static, it follows that $\SG_{\mathcal{U}_\omega}(\phi) = \SG_{\mathcal{U}_{\omega'}}(\phi)$ for any $\omega,\omega' \in \R_{\geq 0}$ and $\SG_{\mathcal{U}_\omega}(\phi) \subseteq \SG(\phi)$. Similarly, for the amplitude-dependent case, we have %
\begin{equation*}\label{eq:A_dep_SG}
    \SG(\phi_A) \subseteq D_{[c(A), d(A)]},
\end{equation*}
where $\SG(\phi_A)$ denotes the $\SG$ of $\phi$ restricted to $y \in L_2$ with $\norm{y}_\infty \leq A$, where $A$ must satisfy~\eqref{eq:amplitude-bound}.

\subsubsection{Derivative map}

Now consider some $y \in H^1$ and let $v = \phi(y)$. Similarly to~\ref{sec:lti_derivative_map},  we wish to prove $v' \in L_2$ and analyze the map $\partial \phi : L_2 \to L_2$ defined by $\partial \phi : y' \mapsto v'$.

If the input obeys $\norm{y}_\infty \leq A$, then the $L_2$-gain of this operator reads as follows
\vspace{-.8em}\begin{gather}
    \textstyle \gamma(\partial \phi_A) := \sup_{\substack{y,y' \in L_2 \\ \norm{y}_\infty \leq A}} \frac{\sqrt{\int_0^\infty | \phi(y(t))'|^2 \dd t}}{\sqrt{\int_0^\infty |y'(t)|^2 \dd t}} \nonumber \\ \textstyle = \sup_{\substack{y,y' \in L_2 \\ \norm{y}_\infty \leq A}} \frac{\sqrt{\int_0^\infty |\phi'(y(t)) y'(t)|^2 \dd t}}{\sqrt{\int_0^\infty |y'(t)|^2 \dd t}} \label{eq:A_dep_SRG_deriv} \\   \textstyle \leq \esssup_{\substack{y,y' \in L_2  \\ \norm{y}_\infty \leq A}} | \phi'(y) | \frac{\sqrt{\int_0^\infty | y'(t)|^2 \dd t}}{\sqrt{\int_0^\infty |y'(t)|^2 \dd t}} \displaystyle = \esssup_{|s| \leq A} | \phi'(s) |. \nonumber
\end{gather}
Since $\phi$ satisfies a Lipschitz bound for all $A$, it obeys the chain rule $v'(t) = \phi(y(t))' = \phi'(y(t)) y'(t)$ a.e., and the weak derivative obeys $|\phi'(s)| \leq \max\{|a(A)|, |b(A)|\}$ for all $s\in \R$ a.e., see~\cite[p. 48]{ziemerWeaklyDifferentiableFunctions1989}. 

\begin{proposition}\label{prop:A_dep_deriv_SG_bound}
    Let $\phi$ be given as in~\eqref{eq:static_nl} and $A>0$, then
    \begin{equation}\label{eq:A_dep_deriv_SG_bound}
        \SG(\partial \phi_A) \subseteq D_{[a(A), b(A)]},
    \end{equation}
    where $\SG(\partial \phi_A)$ denotes the SG of $\partial \phi$ restricted to inputs $y'$ that have $y \in H^1$ and $\norm{y}_\infty \leq A$. 
\end{proposition}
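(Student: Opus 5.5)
The plan is to show directly that every point of $\SG(\partial \phi_A)$ lies in $D_{[a(A),b(A)]}$. I will use the pointwise chain rule $v'(t) = \phi'(y(t))\,y'(t)$ a.e., established just above via~\cite{ziemerWeaklyDifferentiableFunctions1989}, together with the observation that $\partial\phi_A$ acts as multiplication by a bounded function. Fix $y \in H^1$ with $\norm{y}_\infty \leq A$ and $y' \neq 0$, and set $v = \phi(y)$. Define the time-varying gain $k(t) := \phi'(y(t))$. The key first step is to show $k(t) \in [a(A), b(A)]$ for almost every $t$. Since $|y(t)| \leq A$, the slope condition in~\eqref{eq:static_nl} says that every difference quotient of $\phi$ on $[-A,A]$ lies in $[a(A),b(A)]$. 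Hence wherever $\phi$ is differentiable, $\phi'(s) \in [a(A),b(A)]$ for $|s| \leq A$, and this holds for a.e.\ $s$ because $\phi$ is Lipschitz on $[-A,A]$ (Rademacher).

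The one subtlety is the set of $t$ where $y(t)$ lands in the null set where $\phi'$ is undefined. On $\{t : y'(t) = 0\}$ the value of $k(t)$ is irrelevant, since $v'(t) = 0$ there and we may redefine $k(t) := a(A)$. On the complement, the standard chain-rule result for Lipschitz $\phi$ composed with Sobolev $y$ states that $y^{-1}(N)\cap\{y' \neq 0\}$ is a null set whenever $N$ is a null set. So $v' = k\,y'$ a.e. with $k(t) \in [a(A),b(A)]$ a.e. I expect this measure-theoretic point to be the main obstacle. I would handle it by citing~\cite[p.~48]{ziemerWeaklyDifferentiableFunctions1989}, exactly as the paper does, rather than reproving it.

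Next, I write $c := (a(A)+b(A))/2$ and $r := (b(A)-a(A))/2$. The pointwise bound then becomes $|k(t) - c| \leq r$ a.e., and therefore
\[
\norm{v' - c\,y'}^2 = \int_0^\infty |k(t)-c|^2 |y'(t)|^2 \dd t \leq r^2 \norm{y'}^2 .
\]
Let $z = \frac{\norm{v'}}{\norm{y'}} e^{\pm j\angle(y',v')}$ be a point of the SG. Using $|z|^2 = \norm{v'}^2/\norm{y'}^2$ and $\mathrm{Re}\, z = \mathrm{Re}\inner{y'}{v'}/\norm{y'}^2$, expanding the square gives
\[
|z - c|^2 = \frac{\norm{v'}^2 - 2c\,\mathrm{Re}\inner{y'}{v'} + c^2\norm{y'}^2}{\norm{y'}^2} = \frac{\norm{v'-c\,y'}^2}{\norm{y'}^2} \leq r^2 .
\]
Hence $z \in D_{[a(A),b(A)]}$. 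This is the same argument as the standard SRG proof for static slope-restricted nonlinearities, applied to the multiplier $k$. The degenerate case $v' = 0$ gives $z = 0$, and $0 \in D_{[a(A),b(A)]}$ because $|c| \leq r$ holds exactly when $a(A) \leq 0 \leq b(A)$. That condition is not guaranteed in general. However, if $v' = 0$ then $\norm{v' - c\,y'} = |c|\,\norm{y'}$, so the inequality above already forces $|c| \leq r$. The case is therefore covered by the same computation, and the inclusion~\eqref{eq:A_dep_deriv_SG_bound} follows.
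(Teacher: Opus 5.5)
Your proof is correct and follows essentially the same route as the paper. The paper shifts $\phi$ by the midpoint slope $k=-(a(A)+b(A))/2$, bounds the gain of the centered derivative map by $\kappa=(b(A)-a(A))/2$ via the chain rule and $\operatorname{ess\,sup}|\phi'|$, and translates back with the identity $\SG(R+\alpha I)=\SG(R)+\alpha$ from Ryu et al.; your bound $\norm{v'-c\,y'}\le r\norm{y'}$ is that centered gain bound, and your expansion of $|z-c|^2$ (including the case $v'=0$) is an explicit proof of the cited translation identity.
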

\begin{proof}
    Fix some $A>0$ and take $k \in \R$ such that $-a(A)-k = b(A)+k =: \kappa$, then the shifted nonlinearity $\varphi_A := \phi_A + k$ satisfies $\varphi_A \in \partial[-\kappa,\kappa]$, which by~\eqref{eq:A_dep_SRG_deriv} implies $\gamma(\partial \varphi_A) \leq \kappa$. Then, from~\cite{ryuScaledRelativeGraphs2022} we can conclude that $\SG(\partial \varphi_A) \subseteq D_{[-\kappa, \kappa]}$.
    
    Since $\partial \varphi_A (y') =  [\phi_A(y) + ky]' = \partial \phi_A (y') + ky'$, it follows that $\partial \phi_A = \partial \varphi_A - k$. Using SRG identities from~\cite{ryuScaledRelativeGraphs2022}
    \begin{equation*}
        \SG(\partial \phi_A) = \SG(\partial \varphi_A) - k \subseteq D_{[-\kappa-k, \kappa-k]} = D_{[a(A), b(A)]},
    \end{equation*}
    which proves~\eqref{eq:A_dep_deriv_SG_bound}.
\end{proof}

\subsection{Analysis of the Lur'e System}

Throughout sections~\ref{subsubsec:lure_stab} and~\ref{subsubsec:gains}, we assume that Condition~\ref{condition:lure} holds and that~\eqref{eq:lure} is \PP{}.

\subsubsection{Stability of the Lur'e system}\label{subsubsec:lure_stab}

First, $L_2$-stability of~\eqref{eq:lure} is established using Proposition~\ref{prop:lure_stability_theorem}. Since $\SG(\phi) \subseteq D_{[\mathrm{c}, \mathrm{d}]}$ is a disk, it automatically satisfies the chord property, and if the conditions of Proposition~\ref{prop:lure_stability_theorem} are satisfied, then $\dist(\SG(G)^{-1}, -D_{[\mathrm{c}, \mathrm{d}]}) = r$ implies $\gamma([G, \phi]) \leq \frac{1}{r}$. 

Second, we consider $u \in H^1_0$. Since $\SG(\partial \phi) \subseteq D_{[\mathrm{a}, \mathrm{b}]}$ by Proposition~\ref{prop:A_dep_deriv_SG_bound} and $\SG_{\mathcal{D}_0}(\partial G) \subseteq \SG(G)$ by Proposition~\ref{prop:srg_G_deriv}, if $\partial G$ and $\partial \phi$ satisfy Proposition~\ref{prop:lure_stability_theorem}, then $\dist(\SG(G)^{-1}, - D_{[\mathrm{a}, \mathrm{b}]}) = r^\partial$
implies $\norm{y'} \leq \frac{1}{r^\partial} \norm{u'}$, hence $y \in H^1_0$. We see that SRG analysis for $u \mapsto y$ and its derivative can imply smoothness of the output $y$ from the input $u$.

\subsubsection{Computing the gains $\lambda_{\omega, A}$ and $\lambda^\partial_{\omega, A}$}\label{subsubsec:gains}

Once stability of the Lur'e system is established, one can study the gain as a function of frequency and amplitude. The following lemma shows how $\lambda_{\omega, A}$ and $\lambda^\partial_{\omega, A}$ in~\eqref{eq:amplitude-bound-optimization} are computed. 

\begin{lemma}\label{lemma:lambdas}
    If Condition~\ref{condition:lure} holds, $\gamma([G, \phi_A])<\infty$, $\norm{y}_\infty \leq A$, $u \in \mathcal{U}_\omega$ and~\eqref{eq:lure} is \PP{}, then
    \begin{equation}\label{eq:r_w_A}
        \dist(\SG_{\mathcal{U}_\omega}(\tilde{G} )^{-1}, -D_{[c(A), d(A)]}) = r_{\omega, A}>0
    \end{equation}
    implies $\norm{\tilde{y}}_T \leq \frac{1}{r_{\omega, A}} \norm{u}_T$, hence $\lambda_{\omega, A} \leq \frac{1}{r_{\omega, A}}$. If, in addition, $\gamma([G, \partial \phi_A])<\infty$ and $u \in \mathcal{U}_\omega \cap H^1[0,T]$, then
    \begin{equation}\label{eq:r_partial_w_A}
        \dist(\SG_{\mathcal{U}_\omega}(\tilde{G})^{-1}, -D_{[a(A), b(A)]}) = r^\partial_{\omega, A}>0
    \end{equation}
    implies $\norm{\tilde{y}'}_T \leq \frac{1}{r^\partial_{\omega, A}} \norm{u'}_T$, hence $\lambda^\partial_{\omega, A} \leq \frac{1}{r^\partial_{\omega, A}}$.
\end{lemma}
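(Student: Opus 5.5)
The plan is to apply Proposition~\ref{prop:lure_stability_theorem} twice, on the Hilbert space $\mathcal{U}_\omega \subseteq L_2[0,T]$, to the steady-state Lur'e loops. The first loop maps $u\mapsto \tilde y$. The second is its derivative counterpart $u'\mapsto\tilde y'$.

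For the first claim, I start from period preservation. It gives steady states $\tilde e,\tilde y$ that are odd $T$-periodic, so they can be viewed as elements of $\mathcal{U}_\omega$. They satisfy $\tilde y = \tilde G\tilde e$ and $\tilde e = u - \phi(\tilde y)$ on $[0,T]$. The second relation holds because $\phi$ is static and the convergence $y\to\tilde y$ passes through $\phi$, which is Lipschitz on bounded sets.

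Since $\norm{y}_\infty\le A$ and $\tilde y$ is the asymptotic limit of $y$, we also have $\norm{\tilde y}_\infty\le A$. Hence only the amplitude-restricted nonlinearity $\phi_A$ acts, and $\SG(\phi_A)\subseteq D_{[c(A),d(A)]}$, a disk that satisfies the chord property. By~\eqref{eq:srg_G_w}, $\SG_{\mathcal{U}_\omega}(\tilde G)$ is the hyperbolic convex hull of the odd harmonics of the Nyquist curve. The same argument applies with $\tau\phi_A$ for every $\tau\in[0,1]$. Since $-\tau D_{[c(A),d(A)]}$ lies in the hull of $\{0\}$ and $-D_{[c(A),d(A)]}$, and the hypothesis $\gamma([G,\phi_A])<\infty$ together with Condition~\ref{condition:lure} supplies well-posedness and finiteness, I obtain a uniform positive distance $r_\tau\ge r_{\omega,A}$. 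Proposition~\ref{prop:lure_stability_theorem}, applied on $\mathcal{U}_\omega$ and read off at $\tau=1$, then gives $\norm{\tilde y}_T\le r_{\omega,A}^{-1}\norm{u}_T$. Taking the supremum over admissible $u$ yields $\lambda_{\omega,A}\le 1/r_{\omega,A}$.

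For the derivative claim, I take $u\in\mathcal{U}_\omega\cap H^1[0,T]$ and differentiate the loop relations. This gives $\tilde y' = \partial\tilde G\,\tilde e'$ and $\tilde e' = u' - \partial\phi_A(\tilde y')$, the latter by the chain rule for Lipschitz $\phi$ used in~\eqref{eq:A_dep_SRG_deriv}.

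The SG bounds come from earlier results. Proposition~\ref{prop:srg_w_g_deriv} gives $\SG_{\mathcal{U}_\omega}(\partial\tilde G)=\SG_{\mathcal{U}_\omega}(\tilde G)$. Proposition~\ref{prop:A_dep_deriv_SG_bound}, applied on a period where $\norm{\tilde y}_\infty\le A$, gives $\SG(\partial\phi_A)\subseteq D_{[a(A),b(A)]}$. Repeating the homotopy argument with the hypothesis $\gamma([G,\partial\phi_A])<\infty$ yields $\norm{\tilde y'}_T\le (r^\partial_{\omega,A})^{-1}\norm{u'}_T$.

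The main obstacle is a regularity step needed before differentiating: knowing that $\tilde e,\tilde y\in H^1[0,T]$. I would handle this by a priori reasoning. Finiteness of $\gamma([G,\partial\phi_A])$ gives $\tilde y'\in L_2[0,T]$. The $L_2$ bound on $\tilde e$, combined with properness of $G$, then makes $\tilde e'=u'-\phi'(\tilde y)\tilde y'\in L_2$. Oddness of $\tilde e'$ and $\tilde y'$ keeps both in $\mathcal{U}_\omega$, so the SGs over $\mathcal{U}_\omega$ are the relevant ones.
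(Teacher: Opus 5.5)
Your overall structure matches the paper's proof. Period preservation puts $\tilde e,\tilde y$ in $\mathcal{U}_\omega$, you use \eqref{eq:srg_G_w} and $D_{[c(A),d(A)]}$ for the first loop and Propositions~\ref{prop:srg_w_g_deriv} and~\ref{prop:A_dep_deriv_SG_bound} for the derivative loop, and you close with Proposition~\ref{prop:lure_stability_theorem}.

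\textbf{The gap.} Your verification of the homotopy hypothesis of Proposition~\ref{prop:lure_stability_theorem} is invalid. Knowing that $-\tau D_{[c(A),d(A)]}$ lies in the convex hull of $\{0\}$ and $-D_{[c(A),d(A)]}$ does not give $\dist(\SG_{\mathcal{U}_\omega}(\tilde G)^{-1}, -\tau D_{[c(A),d(A)]}) \geq r_{\omega,A}$. A set can stay away from both $\{0\}$ and $-D$ and still cut through an intermediate disk. For example, if $c(A)>0$ and $0<x<c(A)$, the point $-x$ is at distance $c(A)-x>0$ from $-D_{[c(A),d(A)]}$ and at distance $x$ from $0$. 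Yet $-x\in -\tau D_{[c(A),d(A)]}$ for $\tau=x/c(A)\in(0,1)$, so $r_\tau=0$ there. This is the graphical analogue of conditional stability in the Nyquist criterion, and it is exactly why the homotopy condition appears in Proposition~\ref{prop:lure_stability_theorem}. You repeat the same step in the derivative part ("repeating the homotopy argument").

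\textbf{The fix.} The hypothesis $\gamma([G,\phi_A])<\infty$ (resp.\ $\gamma([G,\partial\phi_A])<\infty$) does not supply a uniform margin. Its role is to make the homotopy unnecessary. The homotopy in Proposition~\ref{prop:lure_stability_theorem} only serves to guarantee that the closed-loop signals lie in the Hilbert space. Once finite closed-loop gain is assumed, $\tilde e,\tilde y\in\mathcal{U}_\omega$, and the separation at $\tau=1$ alone gives the bound. That is, \eqref{eq:r_w_A} yields $\norm{\tilde y}_T\le r_{\omega,A}^{-1}\norm{u}_T$, and \eqref{eq:r_partial_w_A} yields the derivative bound. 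This is what the paper does: it checks \eqref{eq:lure_stability_theorem} only at $\tau=1$. With your convexity argument replaced by this, the rest of your proof goes through.

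A minor point on your regularity paragraph: $\tilde e'=u'-\phi'(\tilde y)\tilde y'\in L_2[0,T]$ follows from $u',\tilde y'\in L_2[0,T]$ and $|\phi'|\le\max\{|a(A)|,|b(A)|\}$ on $[-A,A]$. The $L_2$ bound on $\tilde e$ and properness of $G$ play no role there.
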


\begin{proof}
     Let $u \in \mathcal{U}_\omega$. Since $\gamma([G, \phi_A])<\infty$ and~\eqref{eq:lure} is \PP{}, we know $\tilde{e}, \tilde{y} \in \mathcal{U}_\omega$, hence the SGs of $G$ and $\phi_A$ are given by~\eqref{eq:srg_G_w} and $D_{[c(A), d(A)]}$, respectively. Note that we only need to check~\eqref{eq:lure_stability_theorem} at $\tau=1$ since $\gamma([G, \phi_A])<\infty$. Therefore,~\eqref{eq:r_w_A} with Proposition~\ref{prop:lure_stability_theorem} implies $\norm{\tilde{y}}_T \leq \frac{1}{r_{\omega, A}} \norm{u}_T$.

     Under the additional conditions, we can conclude $\tilde{e}, \tilde{y} \in \mathcal{U}_\omega \cap H^1[0,T]$, hence we can use Propositions~\ref{prop:srg_w_g_deriv} and~\ref{prop:A_dep_deriv_SG_bound} for $\partial \tilde{G}$ and $\partial \phi_A$, respectively. Again,~\eqref{eq:r_partial_w_A} with Proposition~\ref{prop:lure_stability_theorem} implies $\norm{\tilde{y}'}_T \leq \frac{1}{r^\partial_{\omega, A}} \norm{u'}_T$.
\end{proof}

The $A \to \infty$ limit is obtained from Lemma~\ref{lemma:lambdas} by taking $a(A) \to a^*, b(A) \to b^*, c(A) \to c^*, d(A) \to d^*$. The resulting margins are denoted $r_\omega, r^\partial_\omega$ and are used to bound $\lambda_\omega \leq \frac{1}{r_\omega}$ and $\lambda^\partial_\omega \leq \frac{1}{r^\partial_\omega}$ in~\eqref{eq:amplitude-bound-A-indep}. 

Only amplitude-dependent gain bounds for $u \in H^1_0$ can be derived considering the $\omega \to 0$ limit in Lemma~\ref{lemma:lambdas}. In that case, Proposition~\ref{prop:srg_G_deriv} is used instead of~\ref{prop:srg_w_g_deriv} in the proof.

\subsubsection{Main result for Lur'e systems}

We can now combine the results from this section with~\eqref{eq:amplitude-bound-optimization} to compute the amplitude and frequency dependent gain in~\eqref{eq:gamma_w_U}.

\begin{theorem}\label{thm:lure}
    Consider~\eqref{eq:lure} where $G$ is stable, rational and strictly proper, $\phi$ is odd and obeys~\eqref{eq:static_nl}, and
    \begin{equation}\label{eq:thm-lure}
        \dist(\SG(G)^{-1}, - \tau D_{[a^*, b^*]}) = r > 0, \quad \forall \tau \in [0, 1].
    \end{equation}
    Then for $R = [G,\phi]$ and all $\omega \in \R_{\geq 0}$ and $\norm{u}_\infty < \infty$,
    \begin{enumerate}
        \item[1)] $R$ is well-posed, and \PP{}  for all $u \in \mathcal{U}_{\omega \mathrm{e}}$,
        \item[2)] $\gamma_\omega(R) \leq \frac{1}{r_{\omega}} < \infty$,
        \item[3)] for all $\omega, U \in \R_{\geq 0}$ such that $\norm{u}_T \norm{u'}_T \leq U$, \eqref{eq:amplitude-bound-optimization} has a solution $A_{\omega,U} < \infty$ and $\gamma_{\omega, U}(R) \leq \frac{1}{r_{\omega, A_{\omega,U}}}$.
    \end{enumerate}
\end{theorem}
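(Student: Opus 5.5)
The proof follows the three items in order, with period preservation in item 1) being the only genuinely new ingredient.

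\emph{Item 1).} Since $[c^*,d^*]\subseteq[a^*,b^*]$, condition~\eqref{eq:thm-lure} also gives $\dist(\SG(G)^{-1},-\tau D_{[c^*,d^*]})\geq r>0$ for all $\tau\in[0,1]$. Well-posedness then comes from the slope condition: $\phi\in\partial[a^*,b^*]$ is globally Lipschitz and $G$ is strictly proper, so the loop $\dot x = Ax + B(u-\phi(Cx))$ has a unique global solution for every $u\in\Lte$. This gives a causal inverse of $e\mapsto u$, and the same holds for every $\tau\phi$, so Condition~\ref{condition:lure} is satisfied. Proposition~\ref{prop:lure_stability_theorem} then yields $\gamma(R)\leq 1/r$.

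For period preservation I would use the incremental version of the argument. By~\eqref{eq:thm-lure}, the disk $D_{[a^*,b^*]}$ also bounds the SRG of $\phi$, so the incremental small-gain/SRG theorem gives a finite incremental gain. The state-space realization is then exponentially incrementally stable (convergent), and I would establish this via the circle-criterion quadratic Lyapunov function obtained from the separation~\eqref{eq:thm-lure}.

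For a $T$-periodic input, the map $x(0)\mapsto x(T)$ is then a contraction, so there is a unique $T$-periodic steady state and every trajectory converges to it. Oddness of $\phi$ handles the half-wave symmetry. If $\tilde x$ is the periodic solution for $u$, then $-\tilde x(t+T/2)$ solves the loop for $-u(t+T/2)=u(t)$. By uniqueness, $\tilde x(t+T/2)=-\tilde x(t)$, so $\tilde e,\tilde y$ are odd $T$-periodic. Boundedness $\norm{u}_\infty<\infty$ keeps the states bounded and ensures $\tilde y\in\Lte$.

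\textbf{Main obstacle.} The step I expect to be hardest is this passage from the SRG separation to incremental exponential convergence for periodic, non-$L_2$ inputs. If the Lyapunov route fails, I would apply the incremental gain bound to the difference of two trajectories with truncated inputs that agree after time $t_0$.

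\emph{Item 2).} Odd $T$-periodic signals contain only the harmonics $k\in 2\Z+1$. Also $\SG_{\mathcal{U}_\omega}(\tilde G)\subseteq\SG(G)$, because of the hull~\eqref{eq:srg_G_w} of a subset of the Nyquist points. Together with $[c^*,d^*]\subseteq[a^*,b^*]$, this gives $r_\omega\geq r>0$. Applying the first part of Lemma~\ref{lemma:lambdas} with $A\to\infty$ gives $\norm{\tilde y}_T\leq \norm{u}_T/r_\omega$. By~\eqref{eq:gamma_w} and~\eqref{eq:rms_fourier_L2}, this is $\gamma_\omega(R)\leq 1/r_\omega\leq 1/r$.

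\emph{Item 3).} First, for $u\in\mathcal{U}_\omega\cap H^1[0,T]$, the same containment argument with $D_{[a^*,b^*]}$ and Proposition~\ref{prop:srg_w_g_deriv} gives $r^\partial_\omega\geq r$. The second part of Lemma~\ref{lemma:lambdas} then yields $\tilde y\in H^1[0,T]$ with $\lambda^\partial_\omega\leq 1/r$.

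Since $\tilde y$ is odd $T$-periodic and continuous, it vanishes at some $t_0$. Lemma~\ref{lemma:gagliardo-nirenberg} then gives~\eqref{eq:amplitude-bound-A-indep} with $\lambda_\omega\lambda^\partial_\omega\leq 1/r^2<\infty$. Hence $A_{\max}=\sqrt{2U}/r$ satisfies~\eqref{eq:amplitude-bound}, the feasible set in~\eqref{eq:amplitude-bound-optimization} is nonempty, and $A_{\omega,U}\leq A_{\max}<\infty$.

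Two further facts are needed:
\begin{itemize}
\item \emph{Feasibility of the infimum.} Monotonicity of $a,b,c,d$ makes $\lambda_{\omega,A},\lambda^\partial_{\omega,A}$ nondecreasing in $A$. Together with the continuity of $\dist$ in the disk endpoints, this shows the infimum is itself feasible. Alternatively, one can work with any feasible $A$ and take limits.
\item \emph{Validity of the restriction.} The true amplitude satisfies $\norm{\tilde y}_\infty\leq A_{\omega,U}$, so restricting $\phi$ to $\phi_A$ is legitimate. This follows by a fixed-point/continuity argument: the bound~\eqref{eq:amplitude-bound} holds for every feasible $A$, and each such bound is self-consistent.
\end{itemize}

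Finally, I apply the first part of Lemma~\ref{lemma:lambdas} with $A=A_{\omega,U}$. Its finite-gain hypothesis holds because $D_{[c(A),d(A)]}\subseteq D_{[a^*,b^*]}$. This gives $\norm{\tilde y}_\mathrm{RMS}\leq \norm{u}_\mathrm{RMS}/r_{\omega,A_{\omega,U}}$, and taking the supremum in~\eqref{eq:gamma_w_U} gives $\gamma_{\omega,U}(R)\leq 1/r_{\omega,A_{\omega,U}}$.
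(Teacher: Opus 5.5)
Your decomposition matches the paper's proof:
\begin{itemize}
\item well-posedness;
\item convergence, plus half-wave symmetry from oddness of $\phi$;
\item items 2) and 3) from $\SG_{\mathcal{U}_\omega}(\tilde G)\subseteq\SG(G)$, $D_{[c^*,d^*]}\subseteq D_{[a^*,b^*]}$ and Lemma~\ref{lemma:lambdas}, with $A_{\omega,U}\le A_{\max}$.
\end{itemize}
The genuine differences are in item 1).

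For well-posedness, the paper uses $\SRG(\phi)\subseteq D_{[a^*,b^*]}$ and the incremental homotopy theorem. Your argument is more elementary and self-contained: a globally Lipschitz $\tau\phi$ and a strictly proper $G$ give unique global Carath\'eodory solutions for every $u\in\Lte$, hence a causal inverse of $e\mapsto u$.

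For convergence, the paper writes out exactly the step you flag as the main obstacle. It shifts the loop by $a^*$ and uses~\eqref{eq:thm-lure} to get $G_a=G/(1+a^*G)$ stable. Using that the SG contains the Nyquist curve, it turns the margin $r$ into $\mathrm{Re}\,G_a(j\omega)>-1/(\mu+r/2)$, with $\mu=b^*-a^*$. It then cites Pavlov et al.\ for exponential convergence of the loop with $\phi-a^*\in\partial[0,\mu]$. That strict circle-criterion inequality is what KYP needs to produce your quadratic incremental Lyapunov function. So your Lyapunov/contraction route is the unpacked version of that citation, and it also shows that oddness is needed only for the half-wave symmetry, not for convergence. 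Two small corrections: the finite incremental $L_2$-gain does not by itself give state convergence, and the period map $x(0)\mapsto x(T)$ is a contraction in the $P$-weighted norm, not necessarily in the Euclidean one.

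In item 3), two points need repair.

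First, the infimum in~\eqref{eq:amplitude-bound-optimization} is attained by monotonicity alone. Let $f(A):=\sqrt{2\lambda_{\omega,A}\lambda^\partial_{\omega,A}U}$, which is nondecreasing, and take feasible $A_n\downarrow A_{\omega,U}$. Then $f(A_{\omega,U})\le f(A_n)\le A_n$, so $A_{\omega,U}$ is feasible. Continuity in $A$ is not available in general, since $a,b,c,d$ are only monotone and may jump, and it is not needed.

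Second, your justification that $\norm{\tilde y}_\infty\le A_{\omega,U}$ is circular. The first inequality in~\eqref{eq:amplitude-bound} is only known \emph{under} the hypothesis $\norm{\tilde y}_\infty\le A$. Self-consistency $f(A)\le A$ therefore does not exclude a true amplitude above $A_{\omega,U}$ when $f$ rises steeply and the feasible set is not an interval. The paper's proof simply asserts this step, so you have found a real omission, but the ``fixed-point/continuity'' argument must be made concrete. One argument that works:
\begin{itemize}
\item Consider the scaled inputs $su$, $s\in[0,1]$, whose harmonic energy is at most $s^2U$, and set $\alpha(s):=\norm{\tilde y_s}_\infty$. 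Lemmas~\ref{lemma:gagliardo-nirenberg} and~\ref{lemma:lambdas} give $\alpha(s)\le A\Rightarrow\alpha(s)\le s f(A)$.
\item $\alpha(0)=0$, and $\alpha$ is continuous: the incremental exponential stability from your item 1) gives Lipschitz dependence of the periodic steady state on $\norm{u_1-u_2}_\infty$.
\item Suppose $\alpha$ first reached $A_{\omega,U}$ at some $s_0<1$. Then $\alpha(s_0)\le s_0f(A_{\omega,U})\le s_0A_{\omega,U}<A_{\omega,U}$, a contradiction.
\item Hence $\alpha<A_{\omega,U}$ on $[0,1)$, and continuity gives $\alpha(1)\le A_{\omega,U}$.
\end{itemize}
The monotone iteration $A_{k+1}=f(A_k)$ from $A_{\max}$ yields valid bounds without any continuity, but it may stall at a fixed point above the infimum. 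With this repair, your final application of Lemma~\ref{lemma:lambdas} at $A=A_{\omega,U}$ goes through as in the paper.
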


\begin{proof}

    First, we show that $[G, \tau \phi]$ is well-posed for all $\tau \in [0,1]$. From~\cite{chaffeyGraphicalNonlinearSystem2023} it follows that $\SRG(\phi) \subseteq D_{[a^*, b^*]}$, hence~\eqref{eq:thm-lure} implies that $[G, \tau \phi]$ is \emph{incrementally} stable and well-posed using an incremental homotopy argument~\cite{chaffeyHomotopyTheoremIncremental2025}.

    Then, we prove that $R$ is \PP{} for $u \in \mathcal{U}_{\omega \mathrm{e}}$. Note that~\eqref{eq:thm-lure} implies that $G_{a}(s):= \frac{G(s)}{1+a^* G(s)}$ is stable since $a^* \in D_{[a^*, b^*]}$. Since $r>0$, the inequality~\eqref{eq:thm-lure} implies $-D_{[0, \mu+r/2]} \cap \SG(G_{a})^{-1} = \emptyset$, which is equivalent to
    \begin{equation}\label{eq:separation-proof-eq}
        \maketextstyle (-D_{[0, \mu + r/2]})^{-1} \cap \SG(G_{a}) = \emptyset,
    \end{equation}
    where $\SG(G_{a})^{-1} = a^*+ \SG(G)^{-1}$ and $\mu = b^*-a^*$ and $(-D_{[0, \mu + r/2]})^{-1} = \C_{\mathrm{Re} \leq -\frac{1}{\mu+r/2}}$. Since the SG contains the Nyquist diagram~\cite{chaffeyGraphicalNonlinearSystem2023} and $\frac{1}{\mu+r/2} < \frac{1}{\mu}$,~\eqref{eq:separation-proof-eq} implies $\mathrm{Re}(\mathsf{C} (j\omega I - \mathsf{A})^{-1} \mathsf{B}) > -\frac{1}{\mu}$ for all $\omega \in \R$, where $\dot x = \mathsf{A}x+\mathsf{B} u, \, y = \mathsf{C} x$ is a state-space representation\footnote{As $G$ is strictly proper, so is $G_{a}$, and there is no feedthrough term $\mathsf{D}u$.} for $y = G_{a}u$. By the conditions in~\cite{pavlovFrequencyDomainPerformance2007}, which require oddness $\phi(-x) = -\phi(x)$, the Lur'e system with $G_{a}$ and $\tilde{\phi} := \phi - a^*$ is exponentially convergent for all bounded piecewise continuous inputs. This means that for any $u \in \mathcal{U}_{\omega \mathrm{e}}$ with $\norm{u}_\infty < \infty$, there exist unique steady-state solutions $\tilde{e}(t+T) = \tilde{e}(t), \tilde{y}(t+T)= \tilde{y}(t)$. Oddness of $\phi$ implies that the state-space representation $\dot x = f(x, u) = \mathsf{A}x + \mathsf{B}(u-\phi(\mathsf{C}x)), \, y = \mathsf{C} x$ has the symmetry $f(-x,-u) = -f(x, u)$. This means that the input $u_2(t) =u(t+T/2) $ yields steady-state outputs $\tilde{e}_2 = -\tilde{e}, \tilde{y}_2 = -\tilde{y}$, which implies odd $T$-periodicity of $\tilde{e}, \tilde{y}$. Since $D_{[c^*,d^*]} \subseteq D_{[a^*, b^*]}$,~\eqref{eq:thm-lure} implies $\gamma(R) \leq \frac{1}{r}$ and hence $\tilde{e}, \tilde{y} \in \mathcal{U}_\omega$ holds, proving part 1). 

    Part 2) follows from the fact that $\SG_{\mathcal{U}_\omega}(\tilde{G}) \subseteq \SG(G)$ and $D_{[c^*,d^*]} \subseteq D_{[a^*, b^*]}$. The bound $\gamma_\omega(R) \leq \frac{1}{r_\omega}$ follows from Lemma~\ref{lemma:lambdas} with $A \to \infty$. 
    
    For part 3), note that $U< \infty$ implies $\norm{u}_\infty < \infty$ by~\eqref{eq:sobolev}, therefore the system is \PP{} for all $u \in \mathcal{U}_{\omega \mathrm{e}}$ and $\norm{u}_T \norm{u'}_T \leq U$. Since~\eqref{eq:thm-lure} guarantees that $r_\omega, r^\partial_\omega < \infty$, the solution~\eqref{eq:amplitude-bound-optimization} satisfies $A_{\omega,U} \leq  \sqrt{2(r_\omega r^\partial_\omega)^{-1} U}$, hence $\gamma_{\omega, U}(R) \leq \frac{1}{r_{\omega, A_{\omega,U}}}$ by Lemma~\ref{lemma:lambdas} with $A = A_{\omega,U}$. 
\end{proof}

Although they are not pursued here, there are some immediate implications of the method in this paper. First, it is possible to ``warm start''~\eqref{eq:amplitude-bound-optimization} with some a priori bound $\norm{y}_\infty \leq A$. This opens up the possibility of analyzing the performance of systems that are locally stable, but unstable for large amplitudes, e.g., systems with cubic nonlinearities. Second, one can use Lemma~\ref{lemma:lambdas} with a different $L_1$-method to compute $\norm{y}_\infty \leq A$ as in~\cite{megretskiCombiningL1L21995}, either replacing~\eqref{eq:amplitude-bound-A-indep} or in combination. Finally, Theorem~\ref{thm:lure} yields more system properties; $H^1$-smoothness of the output, finite derivative gain $u \mapsto y'$, and an amplitude-dependent gain bound $\gamma_U$.

\section{Example}\label{sec:example}

Consider the Lur'e system in Fig.~\ref{fig:lure} where
\begin{equation}\label{eq:example_lure}
    \maketextstyle G(s) = \frac{1}{s+2}, \quad \phi(x) = \sin(x).
\end{equation}
Systems like~\eqref{eq:example_lure}, composed of an LTI filter and a sinusoidal nonlinearity, are used as simple models for PLLs, see~\cite{stephensPhaseLockedLoopsWireless1998}. 

For details on how to compute $\SG_{\mathcal{U}_\omega} (G)$, we refer the reader to~\cite{krebbekxNonlinearBandwidthBode2025a}. For $\phi(x) = \sin(x)$, the functions $a,b,c,d$ in~\eqref{eq:static_nl} are given by $b(A) = d(A) = 1$ and 
\begin{equation*}
    a(A) = 
    \begin{cases}
        \cos(A) &\hspace*{-0.5em}\text{if } A \leq \pi,\\
        -1 &\hspace*{-0.5em}\text{else},
    \end{cases}
    \, c(A) = 
    \begin{cases}
        \frac{\sin(A)}{A} &\hspace*{-0.5em}\text{if } A \leq A^*,\\
        \frac{\sin(A^*)}{A^*} &\hspace*{-0.5em}\text{else},
    \end{cases}
\end{equation*}
where $A^* \approx 4.4934$ is the global minimizer of $\frac{\sin(A)}{A}$ for $A \in \R_{\geq 0}$. Since all conditions for Theorem~\ref{thm:lure} are met, we can compute the gains in~\eqref{eq:gamma_w} and~\eqref{eq:gamma_w_U}. 

\begin{figure}[t]
    \centering
    \begin{subfigure}[b]{\linewidth}
        \centering
        \includegraphics[width = \linewidth]{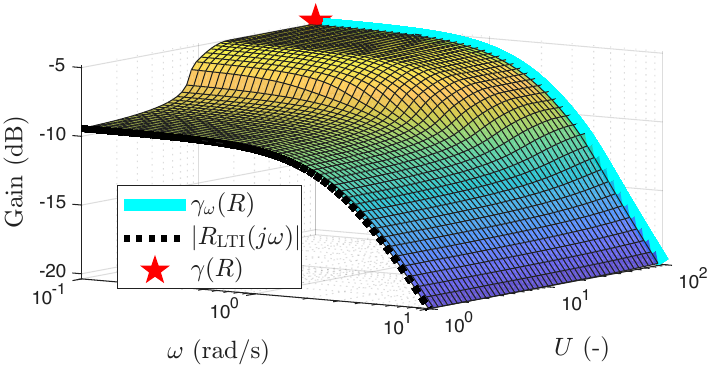}
        \caption{$L_2$-gain $\norm{\tilde{y}}_T \leq \gamma_{\omega, U} \norm{u}_T$ in~\eqref{eq:gamma_w_U}.}
        \label{fig:surf_gamma}
    \end{subfigure}
    \hfill
    \begin{subfigure}[b]{\linewidth}
        \centering
        \includegraphics[width = \linewidth]{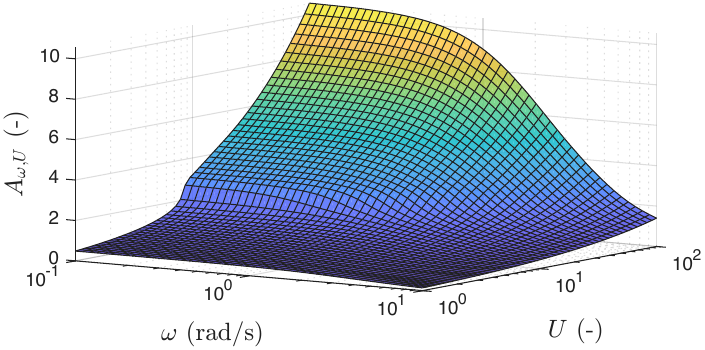}
        \caption{Amplitude $\norm{\tilde{y}}_\infty \leq A_{\omega, U}$ in~\eqref{eq:amplitude-bound-optimization} as a function of $\omega$, $U$.}
        \label{fig:surf_A}
    \end{subfigure}
    \caption{Gain and amplitude bound for example~\eqref{eq:example_lure}.}
    \label{fig:surfs}
\end{figure}

The result is given in Fig.~\ref{fig:surf_gamma}, where it is clear that $\gamma_\omega(R) = \lim_{U \to \infty} \gamma_{\omega, U}(R)$, for each $\omega$. At the $U \to 0$ limit, we see that the ``LTI'' behavior is recovered, which is described by $R_\mathrm{LTI}(s) = \frac{G(s)}{1+ G(s)}$, obtained by replacing $\sin(x) \to x$, i.e., its linearization at $x=0$. Between these limit cases, $\gamma_{\omega, U}$ interpolates between the worst-case NL gain $\gamma_\omega$, and the LTI performance $|R_\mathrm{LTI}(j \omega)|$. Finally, at $\omega \to 0, U\to \infty$, the standard $L_2$-gain $\gamma(R)$ in~\eqref{eq:L2-gain} is attained.

The gains $\gamma_{\omega, U}$ are computed using the solution~\eqref{eq:amplitude-bound-optimization}, which bound $\norm{y}_\infty \leq A_{\omega, U}$ for that $\omega, U$. These are plotted in Fig.~\ref{fig:surf_A}. This figure can be used to determine which constraint $U$ must be used to certify a certain amplitude bound on $y$. In addition, we have proven that for all inputs in $H^1_0$, the resulting output is in $H^1_0$.

\section{Conclusion}\label{sec:conclusion}

This paper develops a graphical frequency- and amplitude-dependent Bode diagram for analyzing the performance of NL feedback systems, generalizing the Bode plot for LTI systems. We focus on Lur'e systems and bound the amplitude of the output in terms of the $L_2$-norm of the output and its derivative using Sobolev theory. Then, SRG methods are used to compute the $L_2$-gain from input (derivative) to output (derivative). We show that our frequency- and amplitude-dependent performance metric interpolates between the worst-case $L_2$-gain of the NL system, and the linearized LTI model. We derive a practical theorem for Lur'e systems with slope-restricted nonlinearities and finally demonstrate the effectiveness of our method on a Lur'e system that represents PLL dynamics.

Future research includes multivariable systems, general interconnections as in~\cite{krebbekxNonlinearBandwidthBode2025a}, broader classes of nonlinearities, including even harmonics and DC gain, and loop-shaping.

\footnotesize
\bibliographystyle{IEEEtran} 
\bibliography{bibliography} 

\end{document}